\def\nn{\nonumber}
\def\bbR{\mathbb{R}}
\def\bbC{\mathbb{C}}
\def\bbN{\mathbb{N}}
\def\EE{\mathcal{E}}
\def\Hom{\mathrm{Hom}}
\def\End{\mathrm{End}}
\def\Con{\mathrm{Con}}
\def\Tr{\mathrm{Tr}}
\def\Der{\mathrm{Der}}
\def\id{\mathrm{id}}
\def\dd{\mathrm{d}}
\def\dim{\mathrm{dim}}
\def\1{\mathbbm{1}}
\def\g{\mathfrak{g}}
\def\nab{\nabla}
\newcommand{\omi}[1]{\buildrel { \buildrel{#1}\over{\vee} } \over .}
\begin{document}

\markboth{Alexander Schenkel}
{Module parallel transports in fuzzy gauge theory}

%
\catchline{}{}{}{}{}
%

\title{
MODULE PARALLEL TRANSPORTS IN FUZZY GAUGE THEORY
}

\author{
ALEXANDER SCHENKEL
}

\address{Fachgruppe Mathematik, Bergische~Universit\"at~Wuppertal,\\
Gau\ss stra\ss e 20, 42119 Wuppertal, Germany.\\
\email{schenkel@math.uni-wuppertal.de} }

\maketitle

\begin{history}
\received{(Day Month Year)}
\revised{(Day Month Year)}
\end{history}

\begin{abstract}
In this article we define and investigate a notion of parallel transport on finite projective modules over finite matrix algebras. Given a derivation-based differential calculus on the algebra and a connection on the module, we construct for every derivation $X$ a module parallel transport, which is a lift to the module of the one-parameter group of algebra automorphisms generated by $X$. This parallel transport morphism is determined uniquely by an ordinary differential equation depending on the covariant derivative  along $X$. Based on these parallel transport morphisms, we define a basic set of gauge invariant observables, i.e.\ functions from the space of connections to the complex numbers. For modules equipped with a hermitian structure, we prove that this set of observables is separating on the space of gauge equivalence classes of hermitian connections. This solves the gauge copy problem for fuzzy gauge theories.
\end{abstract}

\keywords{Noncommutative geometry; Connections on modules; Gauge theory.}


\section{\label{sec:intro}Introduction}
In noncommutative geometry, gauge theories are typically described in terms of connections on
finite projective modules $\EE$ over the noncommutative algebra $A$, see e.g.~Landi's book \cite{Landi:1997sh} for an
introduction. This description requires the notion of a differential calculus on $A$, generalizing differential forms
and the exterior differential to the noncommutative setting. Depending on the algebra $A$, there are different choices for
a calculus at hand, for example the twist deformed differential calculus \cite{Aschieri:2005zs}
 in deformation quantization using Drinfeld twists, the bicovariant differential
calculus for quantum groups \cite{Woronowicz:1989rt} and the derivation-based differential calculus
\cite{DuVi:88,DuViMich:94,DuViMich:96,DuViLecture1,DuViLecture2,Masson:2008uq}, 
being particularly useful for finite matrix algebras
 \cite{DuboisViolette:1988ir}, such as the fuzzy sphere \cite{Madore:1991bw}.

Connections on finite projective modules, and the noncommutative gauge theories they describe,
have been part of many investigations in the past years, leading to interesting results, such as noncommutative instantons
 in deformation quantization \cite{Landi:2006nb} and noncommutative 
 monopoles on fuzzy spaces \cite{Grosse:2001qt}, just to name a few.
 These works aim at understanding topological invariants constructed from the curvature of
the connection, but they do not attempt to construct complete sets of observables, i.e.~gauge invariant
 functions from the space of connections to the complex numbers, 
 which are separating on the space of connections modulo gauge transformations.
 Such complete sets of gauge invariant observables are essential for path-integral or operator algebraic approaches 
 to the quantization of gauge theories.

 In commutative gauge theory the curvature of a connection is not sufficient to extract the complete gauge invariant information
 of the connection. This is the well-known gauge copy problem. 
 In nonabelian Yang-Mills theories this is already the case if we choose spacetime to be
  $\bbR^d$  \cite{WuYang}. For certain gauge groups, the gauge copy problem can be 
  solved by considering the set of all Wilson loops as
  the basic observables  of the gauge theory, see e.g.~\cite{Driver,Sengupta}.
  
 There are examples showing that there is also a gauge copy problem in noncommutative geometry, 
 see e.g.~Proposition \ref{propo:gaugecopyproblem} in this paper, where we review some results of \cite{DuboisViolette:1988ir,DuViLecture1,DuViLecture2,Masson:2008uq}.
 This means that considering only observables which are constructed from the curvature, we can in general not extract the complete
 gauge invariant information of the connection. This calls for a suitable generalization of parallel transport and Wilson loops
 to noncommutative geometry. These studies have been initiated for Moyal-Weyl deformed spaces in earlier works, 
 see e.g.~\cite{Ishibashi:1999hs,Ambjorn:2000cs}.
 
 In our article we will give a purely algebraic definition of parallel transport morphisms on finite projective modules $\EE$ over
 noncommutative algebras $A$. For simplifying the explicit analysis of these morphisms, we study in the present paper
 only the case where $A=M_n(\bbC)$ is a finite matrix algebra, equipped with a derivation-based differential calculus,
 and $\EE$ is any finite projective module over $A$. 
 Given a connection $\nab$ on $\EE$, we can explicitly construct
 for every derivation $X$ a module parallel transport morphism $\Phi^{\nab_X}$, satisfying an ordinary differential equation 
 determined by the covariant derivative $\nab_X$ along $X$. Using the set of all module parallel transport morphisms we define
 a suitable set of gauge invariant observables, i.e.~functions from the space of connections to the complex numbers.
For hermitian finite projective modules we show that this set of observables is separating 
on the space of gauge equivalence classes of hermitian connections.
 
 The structure of this article is as follows: In Section \ref{sec:difcalc}
 we review briefly the basics of derivation-based differential calculi.
 In Section \ref{sec:fuzzy} we study module parallel transports for general finite projective modules
 over $A=M_n(\bbC)$.
 To motivate our definition of module parallel transport we provide  in the \ref{app:commutative} an explicit construction 
 of these morphisms in commutative differential geometry by using the usual concept of parallel transport along curves.


\section{\label{sec:difcalc}Preliminaries on derivation-based differential calculi}
Let $A$ be a unital and associative algebra over $\bbC$, which is not necessarily commutative.
 In \cite{DuVi:88} a general and purely algebraic framework to associate to $A$ a differential calculus
  was introduced. This approach was named the derivation-based differential calculus and has
been further developed in \cite{DuViMich:94,DuViMich:96}, see also the reviews \cite{DuViLecture1,DuViLecture2,Masson:2008uq}.

The basic idea is to consider the space of derivations $\Der(A)$ of the algebra $A$, that is
the $\bbC$-vector space of all $\bbC$-linear maps $X:A\to A$ satisfying, for all $a,b\in A$,
$X(a\,b) = X(a)\,b + a\,X(b)$ and build up a notion of differential geometry on $A$ by 
carefully generalizing the algebraic approach to differential geometry of
commutative smooth manifolds, see e.g.~Koszul's book \cite{Koszul}.
We can equip the vector space $\Der(A)$ with the structure of a Lie algebra by defining, for all $X,Y\in\Der(A)$, the
Lie bracket by the commutator $[X,Y] := X\circ Y - Y\circ X$, where $\circ$ 
denotes the usual composition of endomorphisms $\End_\bbC(A)$.
Furthermore, $\Der(A)$ is a module over the center $\mathcal{Z}(A)$ of the algebra $A$ by defining,
for all $a\in A$, $f\in \mathcal{Z}(A)$ and $X\in\Der(A)$, $(f\cdot X)(a) := f\,X(a)$. 
\begin{definition}\label{defi:diffcalc}
Let $\g\subseteq \Der(A)$ be a sub Lie algebra and a sub $\mathcal{Z}(A)$-module. Let further $\underline{\Omega}_\g^n(A)$, for all $n\in\bbN$, be the vector space of $\mathcal{Z}(A)$-multilinear antisymmetric
maps from $\g^n$ to $A$, $\underline{\Omega}_\g^0(A) := A$ and
\begin{flalign}
\nn \underline{\Omega}^\bullet_\g(A) := \bigoplus_{n\geq 0}\underline{\Omega}_\g^n(A)~.
\end{flalign}
The vector space $\underline{\Omega}^\bullet_\g(A)$ can be equipped with the structure of a differential graded
algebra by defining the product, for all $\omega\in\underline{\Omega}^p_\g(A)$, 
$\eta\in\underline{\Omega}^q_\g(A)$ and $X_1,\dots,X_{p+q}\in\g$,
\begin{multline}
\nn (\omega\,\eta)\big(X_1,\dots,X_{p+q}\big) :=\frac{1}{p!\,q!}\sum\limits_{\sigma \in \mathfrak{S}_{p+q}} (-1)^{\mathrm{sign}(\sigma)}\,
\omega\big(X_{\sigma(1)},\dots, X_{\sigma(p)}\big)\,\\
\times\,\eta\big(X_{\sigma(p+1)},\dots,X_{\sigma(p+q)}\big) 
\end{multline}
and the differential (of degree $1$), for all $\omega\in\underline{\Omega}^p_\g(A)$ and $X_1,\dots,X_{p+1}\in\g$,
\begin{multline}
 \nn \dd \omega\big(X_1,\dots,X_{p+1}\big) :=\sum\limits_{i=1}^{p+1} (-1)^{i+1}\,X_i\Big(\omega\big(X_1,\dots\omi{i} \dots,X_{p+1} \big)\Big) \\
 + \sum\limits_{1\leq i<j\leq p+1} (-1)^{i+j}\,\omega\big([X_i,X_j],\dots \omi{i}\dots \omi{j}\dots,X_{p+1}  \big)~.
\end{multline}
We call $\big(\underline{\Omega}^\bullet_\g(A),\dd\big)$
the {\bf $\g$-restricted derivation-based differential calculus} on $A$.
\end{definition}

Notice that the differential calculus $\big(\underline{\Omega}^\bullet_\g(A),\dd\big)$ depends on 
the choice of $\g$. The prime example indicating the importance of allowing for a proper
 sub Lie algebra and sub $\mathcal{Z}(A)$-module $\g\subset\Der(A)$ is the fuzzy sphere \cite{Madore:1991bw},
 where $\g = \mathfrak{su}_2^\bbC\subset \Der(A)$ is the complexification of $\mathfrak{su}_2$.
 
 Given a $\ast$-algebra $A$ with involution $\ast:A\to A$, we induce an involution on
$\Der(A)$ by defining, for all $X\in\Der(A)$ and $a\in A$, $X^\ast(a) := X(a^\ast)^\ast$.
We assume that $\g\subseteq \Der(A)$ is such that $\ast$ restricts to an involution on $\g$.
We induce further a graded involution on $\underline{\Omega}_\g^\bullet(A)$ by defining, for all 
$\omega\in \underline{\Omega}_\g^p(A)$ and $X_1,\dots, X_p\in \g$,
$\omega^\ast\big(X_1,\dots,X_p\big) := \omega\big(X_1^\ast,\dots,X_p^\ast\big)^\ast $.
As a consequence, we have, for all $\omega\in \underline{\Omega}_\g^p(A) $ and $\eta\in \underline{\Omega}_\g^q(A)$,
$(\omega\,\eta)^\ast= (-1)^{p\,q}~\eta^\ast\omega^\ast$ and  $(\dd\omega)^\ast\ = \dd\omega^\ast$.

We now focus on the case relevant for this paper, namely the algebra
$A=M_n(\bbC)$ of complex-valued $n\times n$-matrices with $n\in\bbN$.
Hermitian conjugation structures $A$ as a $\ast$-algebra. This model has been studied in detail in
\cite{DuboisViolette:1988ir,DuViLecture1,DuViLecture2,Masson:2008uq} and we only summarize the main properties.
The center of $A$ is isomorphic to $\bbC$, via $\bbC\to\mathcal{Z}(A)\,,~\lambda \mapsto \lambda\, \1$, with
$\1\in A$ denoting the unit. Furthermore, all derivations $X\in \Der(A)$ are inner, meaning that
the Lie algebra of traceless complex-valued matrices $\mathfrak{sl}_n\subseteq A$ is isomorphic (as a Lie algebra) to 
$\Der(A)$ via the map $\mathrm{ad}:\mathfrak{sl}_n \to \Der(A)\,,~\gamma \mapsto \mathrm{ad}_{\gamma}$,
where $\mathrm{ad}_\gamma\in\Der(A)$ is the derivation defined by, for all $a\in A$, $\mathrm{ad}_\gamma(a) := [\gamma,a]$.
We denote the inverse of the map $\mathrm{ad}$  by $\theta: \Der(A)\to \mathfrak{sl}_n\subset A$.
With this identification, the Lie algebra (over $\bbR$) of real derivations $\Der_\bbR(A)$ is isomorphic to 
the real sub Lie algebra $\mathfrak{su}_n\subset \mathfrak{sl}_n$ of antihermitian and traceless matrices.
For the $\g$-restricted derivation-based differential calculus we have $
\underline{\Omega}^\bullet_\g(A)\simeq A \otimes_\bbC \bigwedge^\bullet \g^\ast$, where
$\g^\ast$ is the $\bbC$-vector space dual of $\g$. There is a canonical one-form given by
restricting the map $\theta:\Der(A)\to \mathfrak{sl}_n\subset A$ to $\g\subseteq \Der(A)$.
For notational simplicity we use the same symbol $\theta$ for the canonical one-form 
$\underline{\Omega}_\g^1(A) \ni\theta : \g \to A$. The one-form $\theta\in \underline{\Omega}_\g^1(A)$ 
satisfies the Maurer-Cartan equation $\dd \theta - \theta^2 =0$ and its adjoint action
on $A$ is the differential, for all $a\in A$, $\dd a = [\theta,a]$.


\section{\label{sec:fuzzy}Module parallel transport and observables for fuzzy gauge theories}
We study general finite projective modules
over $A=M_n(\bbC)$. Remember that any finite projective right $A$-module $\EE$ is isomorphic to
a right $A$-module of the form $pA^N$, with $N\in\bbN$ being the dimension of an underlying free module
and $p\in M_N(A)$ being a projector, i.e.~$p^2=p$. 
For our studies the following standard lemma turns out to be useful.
\begin{lemma}\label{lem:fpm}
Let $A=M_n(\bbC)$ and let $\EE$ be any finite projective right $A$-module.
Then there exists a $m\in \bbN$, such that $\EE \simeq M_{m,n}(\bbC)$, with $M_{m,n}(\bbC)$ being the
space of complex-valued $m\times n$-matrices. The right $A$-action on the module
 $M_{m,n}(\bbC)$ is given by matrix multiplication from the right, $M_{m,n}(\bbC)\times M_{n}(\bbC)\to M_{m,n}(\bbC)\,,~(s,a)\mapsto
 s\,a$.
\end{lemma}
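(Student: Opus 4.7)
The plan is to exploit the (semi)simple structure of $A = M_n(\bbC)$. First I would fix the candidate simple right $A$-module $V := \bbC^n$, viewed as complex row vectors of length $n$ with $A$ acting by matrix multiplication from the right. One checks directly that $V$ is simple: right multiplication by matrix units moves any nonzero entry to any chosen position, so the only right $A$-submodules of $V$ are $0$ and $V$. Moreover, $V$ is the unique simple right $A$-module up to isomorphism, since $A$ itself decomposes as $A \simeq V^{\oplus n}$ as a right $A$-module (the $i$-th summand being the set of matrices supported on the $i$-th row only), and every simple right $A$-module appears as a quotient of $A$.

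Given any finite projective right $A$-module $\EE \simeq p A^N$ with projector $p \in M_N(A)$, the above decomposition yields $A^N \simeq V^{\oplus nN}$. Hence $\EE$ is a direct summand of $V^{\oplus nN}$, and by semisimplicity of $A$ (equivalently, by Krull--Schmidt--Azumaya applied to the simple module $V$, whose endomorphism ring is $\bbC$) any such summand must itself be isomorphic to $V^{\oplus m}$ for a uniquely determined $0 \leq m \leq nN$.

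Finally I would identify $V^{\oplus m}$ with $M_{m,n}(\bbC)$ by stacking the $m$ row vectors as the rows of an $m\times n$ matrix. Under this identification the diagonal right $A$-action on $V^{\oplus m}$ becomes right matrix multiplication on $M_{m,n}(\bbC)$, which gives the claim.

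No step here poses a real obstacle; the lemma is essentially a concrete incarnation of the Morita equivalence between $M_n(\bbC)$ and $\bbC$. A slicker one-line argument is to invoke this Morita equivalence directly: the functor $W\mapsto W\otimes_\bbC V$ from finite-dimensional $\bbC$-vector spaces to finite projective right $A$-modules is an equivalence of categories, and $\bbC^m\otimes_\bbC V\simeq M_{m,n}(\bbC)$ with the action by right matrix multiplication.
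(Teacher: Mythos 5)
Your proposal is correct; all steps (simplicity of $V=\bbC^n$ as a right module of row vectors, $A\simeq V^{\oplus n}$, summands of $V^{\oplus nN}$ being $V^{\oplus m}$ by semisimplicity, and the identification $V^{\oplus m}\simeq M_{m,n}(\bbC)$) are accurate, and the Morita-equivalence remark is the right conceptual summary. The paper, however, argues differently in style: it never mentions simple modules or semisimplicity, but instead writes down an explicit chain of right $A$-module isomorphisms $A^N\simeq A\otimes_\bbC\bbC^N\simeq \bbC^n\otimes_\bbC\bbC^n\otimes_\bbC\bbC^N\simeq \bbC^{nN}\otimes_\bbC\bbC^n\simeq M_{nN,n}(\bbC)$, regards the projector $p\in M_N(A)$ as an $nN\times nN$ complex matrix acting on the first tensor factor, and sets $\bbC^m\simeq V=p\,\bbC^{nN}$, so that $pA^N\simeq(p\bbC^{nN})\otimes_\bbC\bbC^n\simeq M_{m,n}(\bbC)$. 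The two routes rest on the same underlying fact (that $\bbC^n$ is, up to isomorphism, the unique simple right $M_n(\bbC)$-module with endomorphism ring $\bbC$), but the paper's version is elementary and self-contained linear algebra and produces $m$ concretely as the rank of $p$ viewed as a complex $nN\times nN$ matrix, whereas yours outsources the key step to standard structure theory (semisimplicity, Krull--Schmidt--Azumaya, or Morita equivalence), which is slicker, generalizes immediately to matrix algebras over division rings and to arbitrary (not necessarily projective) finite modules, but invokes machinery the paper deliberately avoids. One cosmetic point: if $\EE=0$ your argument gives $m=0$, so you may want to note the convention on whether $0\in\bbN$, exactly as the paper implicitly does with its bound $m\leq nN$.
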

\begin{proof}
Let $A=M_n(\bbC)$ and let $\EE$ be any finite projective right $A$-module, i.e.~$\EE\simeq p A^N$ for some $N\in\mathbb{N}$. 
We have the following chain of right $A$-module isomorphisms
\begin{flalign*}
A^N &\simeq A\otimes_\bbC \bbC^N \simeq \bbC^n\otimes_\bbC \bbC^n\otimes_\bbC \bbC^N \\
&\simeq
( \bbC^n\otimes_\bbC \bbC^N)\otimes_\bbC \bbC^n \simeq \bbC^{nN}\otimes_\bbC \bbC^n\simeq M_{nN,n}(\bbC)~.
\end{flalign*}
The projector $p\in M_N(A)$ can be regarded as a complex-valued $nN\times nN$-matrix (denoted
by the same symbol) and we denote by $V=p \bbC^{nN}$ the vector space obtained by this projection acting on 
$\bbC^{nN}$. We have $V\simeq \bbC^m$ for some 
$\bbN\ni m \leq nN$ and thus
\begin{flalign*}
pA^N\simeq (p\bbC^{nN})\otimes_\bbC \bbC^n\simeq \bbC^m\otimes_\bbC\bbC^n\simeq M_{m,n}(\bbC)~.
\end{flalign*}
The right $A$-module structure on $M_{m,n}(\bbC)$ is given by matrix multiplication from the right.
\end{proof}
The endomorphism algebra $\End_A(\EE)$  is isomorphic to the complex-valued
$m\times m$-matrices, $\End_A(\EE)\simeq M_m(\bbC)$, which act on $\EE$  by matrix multiplication
from the left, $M_{m}(\bbC)\times M_{m,n}(\bbC)\to M_{m,n}(\bbC)\,,~(T,s)\mapsto T\,s$.

We equip $\EE=M_{m,n}(\bbC)$ with the standard hermitian structure
$\langle \cdot,\cdot\rangle :\EE\times\EE\to A$ defined by, for all $s,t\in \EE$,
\begin{flalign}\label{eqn:hermitgeneral}
\langle s,t\rangle := s^\ast t~,
\end{flalign}
where $\ast$ acts on $\EE$ by hermitian conjugation. 
We say that an endomorphism $T\in \End_A(\EE)$ is hermitian if, for all $s,t\in\EE$,
$\langle s,T(t)\rangle = \langle T(s),t\rangle$.
Under the isomorphism $\End_A(\EE)\simeq M_m(\bbC)$
 (anti)hermitian endomorphisms correspond to (anti)hermitian matrices.
 
A connection on a right $A$-module $\EE$ is a $\bbC$-linear map
  $\nab:\EE\to\EE\otimes_A \underline{\Omega}^1_\g(A)$ satisfying the Leibniz rule,
 for all $s\in\EE$ and $a\in A$,
 $\nab(s\,a) = (\nab s)\,a + s\otimes_A \dd a$.
Since $\underline{\Omega}^1_\g(A) \simeq A\otimes_\bbC \g^\ast$ there is the following isomorphism of right $A$-modules
$\EE\otimes_A\underline{\Omega}^1_\g(A) \simeq \EE\otimes_A (A\otimes_\bbC \g^\ast) \simeq \EE\otimes_\bbC \g^\ast$.
Thus,  we can equivalently regard a connection as a $\bbC$-linear map
$\nab: \EE \to \EE\otimes_\bbC\g^\ast$ satisfying, for all $s\in \EE$ and $a\in A$,
$\nab(s\,a) = (\nab s)\,a + s\,\dd a$. The space of all connections on $\EE$ is denoted by $\Con_A(\EE)$
and it forms an affine space over $\Hom_A(\EE,\EE\otimes_\bbC \g^\ast)\simeq \End_A(\EE)\otimes_\bbC\g^\ast
\simeq M_m(\bbC)\otimes_\bbC \g^\ast$.

 Associated to $\nab\in \Con_A(\EE)$ there is the covariant derivative along any $X\in \g$ defined
by the $\bbC$-linear map $\nab_{X}:\EE \to \EE\,,~s\mapsto \nab_{X}s = (\nab s)(X)$, where the last term denotes the canonical 
 evaluation of $\EE\otimes_\bbC \g^\ast$ on $\g$. The covariant derivative has the following properties, for all
$f,h\in\mathcal{Z}(A)$, $X,Y\in\g$, $s\in \EE$ and $a\in A$,
$\nab_{{f\,X+h\,Y}} s = f\,\nab_{X} s + h\,\nab_{Y} s$ and $\nab_X(s\,a) = (\nab_X s)\,a + s\,X(a)$.

It is well-known \cite{DuboisViolette:1988ir,DuViLecture1,DuViLecture2,Masson:2008uq,Cagnache:2008tz,Massonnew}  that there 
exists a connection on $\EE$, the so-called 
canonical connection, defined by, for all $s\in \EE$,
$\nab^\theta s := -s\,\theta$,
where $\theta\in\underline{\Omega}_\g^1(A)$ is the canonical one-form. Since $\Con_A(\EE)$ is an affine space over
 $M_m(\bbC)\otimes_\bbC \g^\ast$,
any connection $\nab\in\Con_A(\EE)$ can be written as, for all $s\in\EE$,
\begin{flalign}\label{eqn:gaugepotential}
\nab s = \nab^\theta s + B\,s = -s\,\theta  + B\,s~,
\end{flalign}
where $B\in M_m(\bbC)\otimes_\bbC \g^\ast$ is the gauge potential relative to $\nab^\theta$.

A connection is compatible with the hermitian structure (\ref{eqn:hermitgeneral}) if, for all $s,t\in\EE$ and $X\in\g$,
$X\big(\langle s,t \rangle\big) =  \langle \nab_{X^\ast} s,t \rangle + \langle s,\nab_X t\rangle$.
The canonical connection $\nab^\theta$ is compatible with the hermitian
structure (\ref{eqn:hermitgeneral})  and a general connection $\nab\in \Con_A(\EE)$ is compatible if and only if the
gauge potential $B\in M_m(\bbC)\otimes_\bbC \g^\ast$ relative to $\nab^\theta$ is antihermitian.

For all $X,Y\in\g$ we define the curvature endomorphism $F(X,Y)\in \End_A(\EE)$ by, for all $s\in\EE$,
$F(X,Y)s:=\nab_X \nab_Y s - \nab_Y\nab_X s -\nab_{[X,Y]} s$. 
Decomposing $\nab$ as in (\ref{eqn:gaugepotential}) into the canonical connection and a gauge potential 
one obtains after also using the isomorphism $\End_A(\EE)\simeq M_m(\bbC)$ an expression of
the curvature in terms of $m\times m$-matrices
\begin{flalign}\label{eqn:curvature}
F(X,Y) = \big[B(X),B(Y)\big] - B\big([X,Y]\big)~.
\end{flalign}
Notice that the connection is flat, i.e.~of vanishing curvature, if and only if the map $B:\g\to M_m(\bbC)$
 is a Lie algebra representation of $\g$.

We next introduce gauge transformations. Let us denote by $\mathcal{U}_\EE \subseteq \End_A(\EE)$ the
group of unitary endomorphisms of the module $\EE$, i.e.~for all $u\in\mathcal{U}_\EE$ and all $s,t\in\EE$,
$\langle u(s),u(t)\rangle = \langle s, t\rangle$.
Employing the isomorphism $\End_A(\EE)\simeq M_m(\bbC)$ we find that the group $\mathcal{U}_\EE$ is isomorphic
to the group of unitary $m\times m$-matrices, $\mathcal{U}_\EE \simeq U_m\subseteq M_m(\bbC)$.
We define an action of $\mathcal{U}_\EE$ on the affine space of connections
\begin{flalign}
\nonumber \mathcal{U}_\EE\times \Con_A(\EE) \to\Con_A(\EE) ~,~~(u,\nab)\mapsto \nab^u = 
\big(u\otimes_\bbC\id_{\g^\ast} \big)\circ \nab\circ u^\ast~.
\end{flalign}
For the covariant derivative this action simply reads, for all $X\in\g$,
 $(u,\nab_X) \mapsto u\circ\nab_X\circ u^\ast$, and the curvature transforms, for all $X,Y\in\g$,
  as $(u,F(X,Y))\mapsto u\, F(X,Y)\, u^\ast$, where we have implicitly used the isomorphism $\mathcal{U}_\EE \simeq U_m$.
   Notice that the canonical connection $\nab^\theta$ is gauge invariant
  and that the gauge potential $B\in M_m(\bbC)\otimes_\bbC \g^\ast$ (relative to $\nab^\theta$) 
  transforms, for all $X\in\g$, as
  \begin{flalign}\label{eqn:gaugetrafo}
  (u,B(X))\mapsto u\,B(X)\,u^\ast ~.
  \end{flalign} 
 
The following proposition is taken from \cite{DuboisViolette:1988ir,DuViLecture1,DuViLecture2,Masson:2008uq} 
and it provides the main motivation for our studies on module
parallel transport observables to be presented below.
\begin{proposition}\label{propo:gaugecopyproblem}
Let $A=M_n(\bbC)$ and let $\EE =M_{m,n}(\bbC)$ be equipped with the hermitian structure (\ref{eqn:hermitgeneral}).
  Then gauge equivalence classes of flat hermitian connections on $\EE$ are in 1-to-1 
  and onto correspondence with unitary inequivalent
Lie algebra representations $B:\g\to M_m(\bbC)$.
\end{proposition}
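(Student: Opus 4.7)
The statement reads as a direct synthesis of the preceding facts, so the plan is to set up the obvious map in both directions and verify that the equivalence relations on each side match up. I would begin by fixing the canonical connection $\nab^\theta$ as reference, so that every connection $\nab \in \Con_A(\EE)$ is encoded by its gauge potential $B \in M_m(\bbC)\otimes_\bbC\g^\ast$ via \eqref{eqn:gaugepotential}. This sets up a bijection between $\Con_A(\EE)$ and $M_m(\bbC)\otimes_\bbC\g^\ast$, so the task reduces to identifying the two subsets (flat hermitian connections, unitary representations) and the two equivalence relations (gauge transformations, unitary conjugation) on each side.

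In one direction, given a flat hermitian $\nab$, I would extract $B$ and run two checks. First, the curvature formula \eqref{eqn:curvature} says that $F=0$ is equivalent to $B([X,Y]) = [B(X),B(Y)]$ for all $X,Y \in \g$, and since $B$ is $\mathcal{Z}(A)$-linear (hence $\bbC$-linear), this is exactly the condition that $B:\g\to M_m(\bbC)$ is a Lie algebra homomorphism. Second, the hermiticity compatibility condition for $\nab$, which by the remark following \eqref{eqn:gaugepotential} is equivalent to $B$ being antihermitian-valued, upgrades $B$ to a unitary (i.e.\ antihermitian-valued) representation. The converse direction is immediate: given any unitary Lie algebra representation $B:\g\to M_m(\bbC)$, define $\nab s := -s\,\theta + B\,s$; then flatness follows from \eqref{eqn:curvature} and hermiticity from the antihermitian nature of $B(X)$ on real derivations.

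Next I would descend this bijection to equivalence classes. By \eqref{eqn:gaugetrafo}, the action of $u \in \mathcal{U}_\EE \simeq U_m$ on connections translates, at the level of gauge potentials, into the pointwise conjugation $B(X) \mapsto u\,B(X)\,u^\ast$, which is precisely the definition of unitary equivalence of representations of $\g$ on $\bbC^m$. Thus gauge equivalence classes of flat hermitian connections map bijectively onto unitary equivalence classes of unitary representations. Well-definedness of the quotient map follows because $B$ is a representation iff $u B u^\ast$ is, and antihermiticity is preserved under conjugation by unitaries; injectivity and surjectivity on the quotient both follow from the bijectivity on representatives established above.

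The only subtlety worth care — and arguably the main ``obstacle,'' though it is really a bookkeeping matter — is the interaction between the complex structure on $\g$ and the $\ast$-involution: one should check that ``antihermitian $B$'' in the sense used after \eqref{eqn:gaugepotential} really corresponds to a unitary representation of $\g$ as a $\ast$-Lie algebra (i.e.\ that $B(X^\ast) = B(X)^\ast$ holds), so that the notion of unitary equivalence on the representation side matches the gauge equivalence induced by $\mathcal{U}_\EE \simeq U_m$. Once this identification is spelled out using the $\ast$-structure on $\g$ inherited from $\Der(A)$, the proposition follows with no further work.
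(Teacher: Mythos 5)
Your proposal is correct and follows exactly the route the paper intends: the paper's own proof is the one-line remark that the statement ``follows immediately from (\ref{eqn:curvature}) and (\ref{eqn:gaugetrafo})'', i.e.\ flatness $\Leftrightarrow$ $B$ is a Lie algebra homomorphism, hermiticity $\Leftrightarrow$ $B$ antihermitian, and gauge transformations act by unitary conjugation of $B$. You have simply written out the verification (plus the harmless bookkeeping about the $\ast$-structure on $\g$) that the paper leaves implicit.
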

\begin{proof}
Follows immediately from (\ref{eqn:curvature}) and (\ref{eqn:gaugetrafo}).
\end{proof}
Considering the prime example of a free module $\EE = A$ over the fuzzy sphere \cite{Madore:1991bw}, 
where $A=M_n(\bbC)$, $n=2\,j+1$ ($j$ being the maximal spin) and $\g=\mathfrak{su}_2^\bbC$,
there are the unitary inequivalent representations $B:\mathfrak{su}_2^\bbC\to M_n(\bbC)$ 
of spin $ s= 0,\frac{1}{2},\dots,j$. 
Since all of them have the same curvature, namely zero,
we can not determine the gauge equivalence class 
of the connection from curvature observables, i.e.~gauge invariant functions
$\Con_A(\EE)\to\bbC$ constructed out of $F$. 
In other words, sets of observables constructed out of $F$ can not be separating on the
space of gauge equivalence classes of hermitian connections.

This problem 
also exists in the commutative setting, where $A$ is the algebra of smooth functions on a manifold $M$
and $\EE$ is the $A$-module of smooth sections of a hermitian vector bundle $\pi:E\to M$, see e.g.~\cite{WuYang,Driver}.
In this case the solution is to consider parallel transports along curves.
Indeed, it can be shown that, for certain gauge groups,
the Wilson loop observables constructed from such parallel transports 
are separating on the space of connections modulo gauge transformations, see e.g.~\cite{Driver,Sengupta}
and references therein.
A naive generalization of the concept of parallel transport along curves to noncommutative geometry seems to be problematic,
since firstly we in general do not have the notion of curves and secondly  there is no fibre over a ``point''.
See, however, \cite{Ishibashi:1999hs,Ambjorn:2000cs} for an interesting definition of
Wilson loop observables for noncommutative gauge theory on Moyal-Weyl deformed spaces.

In order to develop a concept of parallel transport which does not rely on curves or points, 
and hence is in principle applicable to noncommutative geometry, we propose the following
\begin{definition}\label{defi:transport}
Let $A$ be an associative and unital algebra and $\EE$ a right $A$-module.
\begin{itemize}
\item[1.)]
A {\bf one-parameter group of automorphisms} of $A$ is a map
$\varphi: \bbR \times A \to A\,,~(\tau,a) \mapsto \varphi(\tau,a) =\varphi_\tau(a)$, such that
\begin{itemize}
\item[(i)] $\varphi_\tau(a\,b) = \varphi_\tau(a) \,\varphi_\tau(b)$, for all $\tau\in\bbR$ and $a,b\in A$
\item[(ii)] $\varphi_0 =\id_A$
\item[(iii)] $\varphi_{\tau+\sigma} = \varphi_\tau\circ\varphi_\sigma$, for all $\tau,\sigma\in\bbR$
\end{itemize}
\item[2.)] Let $\varphi:\bbR \times A \to A $ be a one-parameter group of automorphisms of $A$. A {\bf module parallel transport} on
$\EE$ along $\varphi$ is a map $\Phi: \bbR \times \EE \to \EE\,,~(\tau,s)\mapsto \Phi(\tau,s)=\Phi_\tau(s)$, such that
\begin{itemize}
\item[(i)] $\Phi_\tau(s\,a) = \Phi_\tau(s)\,\varphi_\tau(a)$, for all $\tau\in\bbR$, $s\in\EE$ and $a\in A$
\item[(ii)] $\Phi_0=\id_\EE$
\item[(iii)] $\Phi_{\tau+\sigma}= \Phi_\tau\circ \Phi_\sigma$, for all $\tau,\sigma\in\bbR$
\end{itemize}
\end{itemize}
If $A$ and $\EE$ are equipped with a smooth structure, the maps $\varphi$ and $\Phi$ are required to be smooth.
\end{definition}

In the  \ref{app:commutative} we study this notion of module parallel transport in commutative differential geometry
and show its relation to the usual parallel transport along curves.
It will turn out that Definition \ref{defi:transport} 
provides a suitable generalization of parallel transport to (derivation-based) noncommutative geometry.

We now investigate the structures of Definition \ref{defi:transport} in detail 
for $A=M_n(\bbC)$ and $\EE = M_{m,n}(\bbC)$. In this case there is a bijection
between smooth one-parameter groups of automorphism of $A$ and derivations $\Der(A)$ given by the matrix exponential.
We associate to any $X\in\Der(A)$ the one-parameter group of automorphisms
 $\varphi^X:\bbR\times A\to A\,,~(\tau,a)\mapsto \varphi_\tau^X(a)=e^{\tau X}(a)$. The map $\varphi^X$ is uniquely specified
 by the ordinary differential equation
$\frac{d}{d\tau}\varphi^X_\tau = X\circ \varphi^X_\tau$,
 together with the initial condition $\varphi_0^X=\id_A$, which is part of Definition \ref{defi:transport}.
The derivation $X$ can be determined from $\varphi^X$ by taking the $\tau$-derivative at $\tau=0$.
Furthermore, using the canonical one-form $\theta\in\underline{\Omega}_\g^1(A)$, we can express, for all
$X\in \g\subseteq \Der(A)$ and $a\in A$, the derivation
as the difference between left and right multiplication by $\theta(X)$,
$X(a) = \big[\theta(X),a\big] = \theta(X)\,a - a\,\theta(X)  $.
Since left and right multiplication commutes, the exponential becomes, for all $X\in\g$, $a\in A$ and $\tau\in\bbR$,
\begin{flalign}\label{eqn:1pg}
\varphi_\tau^X(a) = e^{\tau\theta(X)} \,a\,e^{-\tau\theta(X)}~.
\end{flalign}
Notice that $\varphi^X$ is a one-parameter group of $\ast$-automorphisms if and only if $X$ is a real derivation.
\begin{proposition}\label{propo:parallelexistence}
Let $A=M_n(\bbC)$, $\EE=M_{m,n}(\bbC)$ and $\nab\in\Con_A(\EE)$. Then there exists  for all $X\in\g\subseteq \Der(A)$
a unique module 
parallel transport $\Phi^{\nab_X}:\bbR\times\EE\to\EE$ along $\varphi^X$ satisfying the ordinary differential equation
\begin{flalign}\label{eqn:difeqn}
\frac{d}{d\tau}\Phi^{\nab_X}_\tau = \nab_X\circ \Phi_\tau^{\nab_X}~.
\end{flalign} 
It is given by, for all $X\in\g$, $s\in\EE$ and $\tau\in\bbR$,
\begin{flalign}\label{eqn:hol}
\Phi_\tau^{\nab_X}(s) = e^{\tau B(X)}\,s\,e^{-\tau \theta(X)}~,
\end{flalign}
where $B\in M_m(\bbC)\otimes_\bbC \g^\ast$ is the gauge potential relative to the canonical connection $\nab^\theta$.
\end{proposition}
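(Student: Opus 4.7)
The plan is to take (\ref{eqn:hol}) as the candidate for $\Phi^{\nab_X}$, and to verify both that it satisfies the three axioms of a module parallel transport from Definition \ref{defi:transport} and the ODE (\ref{eqn:difeqn}), and then to argue uniqueness by standard linear ODE theory.

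First, using the decomposition (\ref{eqn:gaugepotential}), the covariant derivative takes the form $\nab_X s = B(X)\,s - s\,\theta(X)$ for all $s \in \EE = M_{m,n}(\bbC)$. Differentiating (\ref{eqn:hol}) with respect to $\tau$ and using that left multiplication by $B(X)$ and right multiplication by $\theta(X)$ commute gives
\[
\frac{d}{d\tau}\Phi_\tau^{\nab_X}(s) = B(X)\,e^{\tau B(X)}\,s\,e^{-\tau\theta(X)} - e^{\tau B(X)}\,s\,e^{-\tau\theta(X)}\,\theta(X) = \nab_X\big(\Phi_\tau^{\nab_X}(s)\big),
\]
so the ODE (\ref{eqn:difeqn}) holds, and setting $\tau=0$ recovers $\id_\EE$.

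Next I would verify the axioms in Definition \ref{defi:transport}. Condition (ii) is immediate from the formula; condition (iii) follows at once from the group law of the two matrix exponentials $e^{\tau B(X)}$ and $e^{\tau\theta(X)}$ together with the fact that they act on disjoint sides of $s$; and condition (i) reduces to the identity
\[
\Phi_\tau^{\nab_X}(s\,a) = e^{\tau B(X)}\,s\,a\,e^{-\tau\theta(X)} = e^{\tau B(X)}\,s\,e^{-\tau\theta(X)}\,e^{\tau\theta(X)}\,a\,e^{-\tau\theta(X)} = \Phi_\tau^{\nab_X}(s)\,\varphi_\tau^X(a),
\]
where the last equality uses the explicit form (\ref{eqn:1pg}) of $\varphi^X_\tau$.

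For uniqueness, the observation to exploit is that although $\nab_X$ fails to be $A$-linear, it is nevertheless a $\bbC$-linear endomorphism of the finite-dimensional complex vector space $M_{m,n}(\bbC)$; hence the linear ODE (\ref{eqn:difeqn}) with initial condition $\Phi_0^{\nab_X} = \id_\EE$ (forced by property (ii)) admits a unique smooth solution on all of $\bbR$ by the classical Picard--Lindel\"of theorem, namely the operator exponential $\exp(\tau\nab_X)$. The only conceptual point needing care is precisely this recognition that linear ODE theory applies on the underlying finite-dimensional $\bbC$-vector space; after that, the verification of the three axioms and the derivation of the factorized closed form (\ref{eqn:hol}) are routine matrix-exponential manipulations based on the commutation of left and right multiplication on $M_{m,n}(\bbC)$.
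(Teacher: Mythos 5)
Your proposal is correct and follows essentially the same route as the paper's proof: uniqueness comes from viewing $\nab_X$ as a $\bbC$-linear endomorphism of the finite-dimensional space $M_{m,n}(\bbC)$ so that the linear ODE with initial condition $\Phi_0^{\nab_X}=\id_\EE$ has the unique solution $e^{\tau\nab_X}$, the closed form (\ref{eqn:hol}) follows from the decomposition (\ref{eqn:gaugepotential}) and the commutation of left and right multiplication, and the parallel-transport axioms are checked directly using (\ref{eqn:1pg}). The only (immaterial) difference is order of presentation: the paper derives (\ref{eqn:hol}) from $e^{\tau\nab_X}$, whereas you verify that (\ref{eqn:hol}) solves the ODE and then invoke uniqueness.
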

\begin{proof}
The differential equation (\ref{eqn:difeqn}) subject to the initial condition
 $\Phi^{\nab_X}_0=\id_\EE$, which is part of Definition
 \ref{defi:transport}, has a unique solution given by
$\Phi^{\nab_X}_\tau = e^{\tau\nab_X}$,
where we regard $\nab_X:\EE\to\EE$ as an element of $\End_\bbC(\EE)\simeq M_{n m}(\bbC)$.
Using the decomposition  (\ref{eqn:gaugepotential}) we have, for all $X\in \g$ and $s\in\EE$,
$\nab_X(s) = -s\,\theta(X) + B(X)\,s $.
Thus, for all $X\in \g$, $s\in\EE$ and $\tau\in\bbR$, we obtain (\ref{eqn:hol}).
From (\ref{eqn:hol}) and (\ref{eqn:1pg}) one easily checks that $\Phi^{\nab_X}$ is a module parallel transport along $\varphi^X$, i.e.~that
it satisfies the conditions posed in Definition  \ref{defi:transport}.
\end{proof}
Notice that from $\Phi^{\nab_X}$  the covariant derivative $\nab_X$ can be determined  by taking the $\tau$-derivative
at $\tau=0$. Thus, the complete connection $\nabla$ can be determined from the set  $\big\{\Phi^{\nab_X}\big\}_{X\in \g}$.

We come to the definition of a class of gauge invariant observables, i.e.~gauge invariant functions
$\Con_A(\EE)\to\bbC$, which is separating on the space of hermitian connections modulo gauge transformations.
We can construct for {\it any} $\tau\in\bbR$ and $X\in\g$ an 
endomorphism $\widetilde{\Phi}^{\nab_X}_\tau\in\End_A(\EE)$  by composing $\Phi^{\nab_X}_\tau$ with the
right multiplication of $e^{\tau\theta(X)}\in A$, for all $\tau\in \bbR$, $X\in\g$ and $s\in\EE$,
\begin{flalign}
\nonumber \widetilde{\Phi}_\tau^{\nab_X}(s) := \Phi_\tau^{\nab_X}(s)\,e^{\tau\theta(X)} 
= e^{\tau B(X)}\,s ~,
\end{flalign}
where in the second equality we have used (\ref{eqn:hol}).
Applying the trace
$\Tr :\End_A(\EE)\to\bbC\,,~T\mapsto \Tr(T) = \Tr_{M_m(\bbC)}\big(T\big)$ on
products of the endomorphisms $\widetilde{\Phi}_\tau^{\nab_X}$ (without loss of generality we set $\tau=1$), 
we obtain the gauge invariant observables
\begin{flalign}\label{eqn:observablescorrect}
\mathcal{W}_{(X_1,\dots, X_N)} :\Con_A(\EE)  \to\bbC\,,\,
\nab \mapsto \mathcal{W}_{(X_1,\dots,X_N)}(\nab) 
= \Tr\left(\widetilde{\Phi}_1^{\nab_{X_1}}\cdots \widetilde{\Phi}_1^{\nab_{X_N}} \right)
\end{flalign}
labeled by $(X_1,\dots,X_N)\in\g^{N}$ and $N\in \mathbb{N}$. 
Note that for defining the observables (\ref{eqn:observablescorrect}) we {\it did not} require the automorphisms
 $\varphi_\tau^{X_{i}}$, $i=1,\dots,N$, to be periodic in $\tau$.
This is similar to the gauge invariant {\it open} Wilson lines discovered in noncommutative
gauge theory on the Moyal-Weyl space, see e.g.~\cite{Ishibashi:1999hs,Ambjorn:2000cs}.

Let us define the set of basic observables by
\begin{flalign}\label{eqn:obssetgeneral}
\mathcal{O} := \Big\{\mathcal{W}_{(X_1,\dots,X_N)} : (X_1,\dots,X_N)\in\g_\bbR^{N}~,~~N\in\mathbb{N}\Big\}~.
\end{flalign}
As a main result we can solve the gauge copy problem for fuzzy gauge theories due to the following
\begin{theorem}\label{theo:general}
Let $A=M_n(\bbC)$ and let $\EE=M_{m,n}(\bbC)$ be a finite projective right $A$-module,
equipped with the hermitian structure (\ref{eqn:hermitgeneral}).
Then the set of observables $\mathcal{O}$ (\ref{eqn:obssetgeneral}) is separating on the space of hermitian  connections
modulo gauge transformations.
\end{theorem}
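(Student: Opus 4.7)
The plan is to unpack the observables via Proposition \ref{propo:parallelexistence}, reduce the separating statement to the classical invariant-theoretic fact that traces of noncommutative monomials separate orbits of semisimple matrix tuples under simultaneous conjugation, and finally upgrade the resulting $GL_m(\bbC)$-conjugation to a unitary one via polar decomposition. Let $B,B'\in M_m(\bbC)\otimes_\bbC\g^\ast$ be the gauge potentials of two hermitian connections $\nab,\nab'$ relative to $\nab^\theta$. Under $\End_A(\EE)\simeq M_m(\bbC)$, the composition $\widetilde{\Phi}_1^{\nab_{X_1}}\circ\cdots\circ\widetilde{\Phi}_1^{\nab_{X_N}}$ corresponds to left multiplication by $e^{B(X_1)}\cdots e^{B(X_N)}$, so
\begin{flalign}
\nonumber \mathcal{W}_{(X_1,\dots,X_N)}(\nab)=\Tr\!\left(e^{B(X_1)}\cdots e^{B(X_N)}\right),
\end{flalign}
and gauge invariance is immediate from $B(X)\mapsto u\,B(X)\,u^\ast$ together with cyclicity of the trace.

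Next I would extract word-level trace identities from the exponential ones. Since $\g_\bbR$ is an $\bbR$-vector space, $(\tau_1 X_1,\ldots,\tau_N X_N)\in\g_\bbR^{N}$ for any $\tau_i\in\bbR$ and $X_i\in\g_\bbR$; expanding each $e^{\tau_i B(X_i)}$ in a multivariate power series and comparing the coefficients of $\tau_1\tau_2\cdots\tau_N$ on both sides of $\mathcal{W}_{(\tau_1 X_1,\ldots,\tau_N X_N)}(\nab)=\mathcal{W}_{(\tau_1 X_1,\ldots,\tau_N X_N)}(\nab')$ yields
\begin{flalign}
\nonumber \Tr\!\left(B(X_1)\cdots B(X_N)\right)=\Tr\!\left(B'(X_1)\cdots B'(X_N)\right)
\end{flalign}
for every $N\in\bbN$ and every $X_1,\ldots,X_N\in\g_\bbR$. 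Fixing an $\bbR$-basis $\{E_a\}$ of $\g_\bbR$ and setting $M_a:=B(E_a)$, $M'_a:=B'(E_a)$, $\bbR$-multilinearity of the identities above is equivalent to the tuples $(M_a)$ and $(M'_a)$ having equal traces of all noncommutative monomials. Hermiticity of $\nab,\nab'$ makes the $M_a,M'_a$ antihermitian, so the unital algebras $\bbC\langle\{M_a\}\rangle$ and $\bbC\langle\{M'_a\}\rangle$ inside $M_m(\bbC)$ are $\ast$-subalgebras, hence finite-dimensional $C^\ast$-algebras and in particular semisimple. The classical theorem of Specht/Sibirskii/Procesi then produces $g\in GL_m(\bbC)$ with $M'_a=g\,M_a\,g^{-1}$ for every $a$, equivalently $B'(X)=g\,B(X)\,g^{-1}$ for all $X\in\g_\bbR$ and hence, by $\bbC$-linearity together with the decomposition $\g=\g_\bbR\oplus i\g_\bbR$, for all $X\in\g$.

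Finally, polar-decompose $g=u\,p$ with $u\in U_m\simeq\mathcal{U}_\EE$ and $p$ positive-definite hermitian, so $p^2=g^\ast g$. Taking hermitian conjugates of $M'_a=g\,M_a\,g^{-1}$ and using the antihermiticity of both sides shows that $g^\ast g=p^2$ commutes with every $M_a$; consequently so does $p$, as $p$ lies in the commutative $C^\ast$-subalgebra generated by $p^2$. Hence $g\,M_a\,g^{-1}=u\,p\,M_a\,p^{-1}\,u^\ast=u\,M_a\,u^\ast$, giving $B'(X)=u\,B(X)\,u^\ast$ for all $X\in\g$, which by (\ref{eqn:gaugetrafo}) means that $\nab$ and $\nab'$ lie in the same gauge orbit. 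The principal hurdle is the invocation of the Specht/Sibirskii/Procesi trace-word theorem, together with the polar-decomposition refinement that places the conjugator in $\mathcal{U}_\EE$ rather than merely in $GL_m(\bbC)$; the Taylor-expansion reduction and the concluding computation are then routine.
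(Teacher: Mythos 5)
Your proposal is correct and follows essentially the same route as the paper: unpack $\mathcal{W}_{(X_1,\dots,X_N)}(\nab)=\Tr\big(e^{B(X_1)}\cdots e^{B(X_N)}\big)$, differentiate (Taylor-expand) in $\tau_1,\dots,\tau_N$ to reduce the hypothesis to equality of traces of all noncommutative monomials in the antihermitian components of $B$ and $B'$, and then conclude simultaneous unitary conjugacy by invariant theory. The only difference is that where the paper simply cites Procesi's trace-word theorems, you reprove the unitary version yourself --- semisimplicity of the $\ast$-algebras generated by antihermitian tuples gives a $GL_m(\bbC)$-conjugation from the character/trace identities, and the polar-decomposition argument upgrades it to $u\in U_m$ --- which is a correct, self-contained filling-in of exactly the step the paper delegates to the references.
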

\begin{proof}
Let $\nab,\nab^\prime \in\Con(\EE)$ be two hermitian connections.
Using a basis $\{e_i\in \g_\bbR :i=1,\dots,\dim(g_\bbR)\}$ of $\g_\bbR$, the 
connection $\nab^{(\prime)}$ is uniquely specified by the components $B_i^{(\prime)} := B^{(\prime)}(e_i)$
of the gauge potential $B^{(\prime)}$ relative to $\nab^\theta$, which are antihermitian $m\times m$-matrices.
We have to show that  $\mathcal{W}_{(X_1,\dots,X_N)}(\nab) = \mathcal{W}_{(X_1,\dots,X_N)}(\nab^\prime)$, for all
$(X_1,\dots,X_N)\in\g_\bbR^{N}$ and $N\in\mathbb{N}$, implies that $\nab$ and $\nab^\prime$ are
gauge equivalent. On the level of gauge potentials this means that there exists
$u\in U_m$, such that $B_i^\prime = u\,B_i\,u^\ast$, for all $i$, cf.~(\ref{eqn:gaugetrafo}).
This statement is shown as follows: Taking derivatives along $\tau_1,\dots,\tau_N$ of the observables
$\mathcal{W}_{(\tau_1 \,X_1,\dots,\tau_N\,X_N)}$ at $\tau_1=\dots=\tau_N=0$ we recover the traces
of any monomial in the antihermitian matrices $B_i^{(\prime)}$, which according to the condition
 $\mathcal{W}_{(X_1,\dots,X_N)}(\nab) = \mathcal{W}_{(X_1,\dots,X_N)}(\nab^\prime)$, for all 
 $(X_1,\dots,X_N)\in\g_\bbR^{N}$ and $N\in\mathbb{N}$, have to agree. 
 The fundamental theorems of invariant theory, see e.g.~\cite[Statement (5)]{invariant1} and 
 \cite{invariant2} for details, then imply that $B_i^\prime = u\,B_i\,u^\ast$, for all $i$,
 and hence that $\nab$ and $\nab^\prime$ are gauge equivalent.
\end{proof}

\section*{Acknowledgments}
I would like to thank Hanno Gottschalk, Thierry Masson, Thorsten Ohl, 
Christian S\"amann, Peter Schupp, Richard Szabo 
and Christoph Uhlemann for useful discussions and comments.
Furthermore, I am very grateful to the referee for constructive comments
and to Markus Reineke for providing me with help to implement them.


\appendix

\section{\label{app:commutative} Module parallel transport in commutative differential geometry}
Let $M$ be a smooth manifold, which we assume for the moment to be compact.
Let further $\pi: E\to M$ be a smooth complex vector bundle over $M$ and $\nab$ a connection
on $E$. Given any smooth path $\gamma: [\tau_1,\tau_2]\to M$, with $\tau_1<\tau_2\in\bbR$, 
we can use the connection $\nab$ to construct
a parallel transport map $h^\nab_\gamma:E_{\gamma(\tau_1)}\to E_{\gamma(\tau_2)}$ from the fibre 
over the starting point $\gamma(\tau_1)\in M$ to the fibre over the ending point $\gamma(\tau_2)\in M$ of the path $\gamma$.
 The map $h^\nab_\gamma$ is a vector space isomorphism and it describes how a vector in $E_{\gamma(\tau_1)}$ 
 is parallel transported along the curve $\gamma$, yielding a vector in $E_{\gamma(\tau_2)}$. 

The goal is to induce from this local concept of parallel transport along curves a global one.
In view of the structures available in noncommutative geometry, we intend to derive a notion
of parallel transport of sections  of $E$ along flows on the algebra $C^\infty(M)$.
For this let us first consider
a one-parameter group of diffeomorphisms $f:\bbR\times M\to M$, i.e.~$f_\tau:=f(\tau,\,\cdot\,)$ is a diffeomorphism for all
$\tau\in\bbR$, $f_0 = \id_M$ and $f_\tau\circ f_\sigma = f_{\tau+\sigma}$, for all $\tau,\sigma\in\bbR$.
Notice that fixing some point $p\in M$ and some interval $[\tau_1,\tau_2]\subset \bbR$ we obtain
a path by setting $[\tau_1,\tau_2]\to M \,,~\tau \mapsto  f_\tau(p)=f(\tau,p)$, with starting point $f(\tau_1,p)$
 and ending point $f(\tau_2,p)$.
Using the connection $\nab$ we can  lift any one-parameter group of diffeomorphisms $f$ to a one-parameter
group of vector bundle automorphisms $f^\nab:\bbR\times E \to E$, such that the following diagram commutes:
\begin{flalign}\label{eqn:comdiagapp}
\xymatrix{
\ar[d]_-{\id_\bbR\times \pi}\bbR\times E \ar[rr]^-{f^\nab}& & E \ar[d]^-{\pi}\\
\bbR\times M \ar[rr]^-{f} && M
} 
\end{flalign}
Explicitly, the map $f^\nab$ reads, for all $\tau\in\bbR$ and $e\in E$,
$f^\nab(\tau,e):= h^\nab_{\gamma_{\tau,e}}(e)$, with the path given by 
$\gamma_{\tau,e}:[0,\tau]\to M\,,~\sigma \mapsto f(\sigma,\pi(e))$.

We denote the algebra of smooth complex-valued functions on $M$ by $A:= C^\infty(M)$
and the $A$-module of smooth complex-valued sections of $E$ by $\EE:= \Gamma^\infty(E)$.
We can canonically lift $f$ and $f^\nab$ to maps on $A$ and $\EE$.
On $A$ we obtain the one-parameter group of algebra automorphisms given by 
\begin{flalign*}
\varphi : \bbR\times A\to A\,,~(\tau,a) \mapsto \varphi_\tau(a) = a\circ f_{-\tau}~,
\end{flalign*}
and on $\EE$ the one-parameter group of $\bbC$-linear maps
\begin{flalign*}
\Phi^\nab : \bbR \times \EE \to \EE \,,~(\tau,s)\mapsto \Phi^\nab_\tau(s) = f^\nab_\tau\circ s\circ f_{-\tau}~.
\end{flalign*}
Using the commutative diagram (\ref{eqn:comdiagapp}) we can easily verify the following 
compatibility condition between $\varphi$, $\Phi^\nab$ and the right
$A$-module structure on $\EE$, for all $\tau\in\bbR$, $s\in \EE$ and $a\in A$,
\begin{flalign*}
\Phi_\tau^\nab(s\cdot a) = \Phi_\tau^\nab(s) \cdot \varphi_\tau(a)~.
\end{flalign*}
The reader should compare this equation with the second part of Definition \ref{defi:transport}.
We can associate to the one-parameter group of diffeomorphisms $f$ the generating vector field
$X\in \Der(A)$ by taking the $\tau$-derivative of $\varphi_\tau$ at $\tau=0$. Taking the $\tau$-derivative
of $\Phi^\nab_\tau$ at $\tau=0$ yields the covariant derivative $\nab_X$ along $X$. More generally, taking the
$\tau$-derivative of $\varphi_\tau$ and $\Phi^\nab_\tau$ leads to the differential equations
\begin{flalign}\label{app:dgl}
\frac{d}{d\tau} \varphi_\tau = X\circ \varphi_\tau\quad,\qquad \frac{d}{d\tau}\Phi^\nab_\tau = \nab_X\circ \Phi^\nab_\tau~.
\end{flalign}

The above discussion shows that, starting from a one-parameter group of diffeomorphisms of $M$, 
we can construct the pair of maps $(\varphi,\Phi^\nab)$. Furthermore, starting from a one-parameter group of algebra
automorphisms $\varphi:\bbR\times A\to A$, we can integrate the generating vector field 
$X(\,\cdot\,) = \frac{d}{d\tau} \varphi_\tau(\,\cdot\,) \vert^{}_{\tau=0}$ to yield a one-parameter group
of diffeomorphisms of $M$, which then can be used to construct $\Phi^\nab$.
In other words, given a connection $\nab$ on $E$ we can construct for every one-parameter group of
algebra automorphisms $\varphi$ a module parallel transport $\Phi^\nab$ according to Definition \ref{defi:transport}.
Because of (\ref{app:dgl}) the covariant derivative $\nab_X$ along the generator $X$ of $\varphi$ can be reconstructed from this
transport. 
Since we have assumed $M$ to be compact, every vector field $X$ can be integrated to yield a one-parameter
group of algebra automorphisms, and thus the covariant derivative $\nab_X$ along every vector field $X$
can be reconstructed from module parallel transports. This specifies uniquely the connection $\nab$ on $E$.

Let us briefly discuss the situation where $M$ is noncompact. In this case, there can be non-complete vector fields
which do not give rise to global one-parameter groups of diffeomorphisms of $M$. However, every compactly
supported vector field is complete, and can be integrated to a global one-parameter group of diffeomorphisms,
which gives rise to the pair of maps $(\varphi,\Phi^\nab)$. Thus, we can reconstruct the covariant derivative $\nab_X$
along every compactly supported vector field $X$ from module parallel transports, which is sufficient for uniquely specifying the
connection.


\end{document}